\newtheorem{theorem}{Theorem}
\newtheorem{cor}{Corollary}
\newtheorem{obs}{Observation}
\newtheorem{lemma}{Lemma}
\title{Cannibal Animal Games: \\  a new variant of Tic-Tac-Toe\footnote{ Work partially supported by the ESF EUROCORES programme EuroGIGA, CRP ComPoSe: grant EUI-EURC-2011-4306, for Spain.}}
\author{
Jean Cardinal\thanks{Department of Computer Science, Universit\'e Libre de Bruxelles (ULB), Belgium, {\tt \{jcardin, sebastien.collette, stefan.langerman, perouz.taslakian\}@ulb.ac.be}}
\and
S\'{e}bastien Collette${}^*$
\and
Hiro Ito\thanks{School of Informatics and Engineering, The University of Electro-Communications, Japan, 
{\tt itohiro@uec.ac.jp}}
\and
Matias Korman\thanks{Universitat Polit\`ecnica de Catalunya (UPC), Barcelona. {\tt{matias.korman@upc.edu}}. With the support of the Secretary for Universities and Research of the Ministry of Economy and Knowledge of the Government of Catalonia and the European Union.}
\and
Stefan Langerman${}^*$\thanks{Ma\^itre de Recherches du FRS-FNRS.} 
\and
Hikaru Sakaidani\thanks{
School of Informatics, Kyoto University, Japan, 
{\tt sakaidani@lab2.kuis.kyoto-u.ac.jp}}
\and
Perouz Taslakian${}^*$ 
}
\date{}
\begin{document}
\maketitle

\begin{abstract}
This paper presents a new partial two-player
game, called the \emph{cannibal animal game},  which is a  variant of Tic-Tac-Toe. 
The game is played on the infinite grid, where in each round a player chooses and occupies free cells. 
The first player Alice can occupy a cell in each turn and 
wins if she occupies a set of cells, the union of a subset of which is a 
translated, reflected and/or rotated copy of a previously agreed upon polyomino $P$ (called an \emph{animal}).
The objective of the second player Bob is to 
prevent Alice from creating her animal by occupying in each round
a translated, reflected and/or rotated copy of $P$.
An animal is a \emph{cannibal} if Bob has a winning strategy, 
and a \emph{non-cannibal} otherwise. 
This paper presents some new tools, such as the \emph{bounding strategy} and the \emph{punching lemma},
to classify animals into cannibals or non-cannibals.  
We also show that the \emph{pairing strategy} works for this problem. 
\end{abstract}

\section{Introduction}

Variants of the Tic-Tac-Toe game have been the focus of a number of studies in the area of 
recreational mathematics~\cite{winningways3,1-2-achievementgame,TTT_Gardner,Eureka42,avoid,
handicap_square,ItoHERCMA,tttsurvey}. 
Probably the most studied among these games is an achievement game,
a class of generalized Tic-Tac-Toe games presented by Harary~\cite{TTT_Gardner,Eureka42}.
%An achievement game
%so called generalized tic-tac-toe, which have been presented 
%by F.~Harary \cite{TTT_Gardner,Eureka42} is the most 
%studied games among them: 
A {\em polyomino} or an {\em animal} is a set of connected cells (in the 4-neighbor topology) 
of the infinite grid. In the Harary games~\cite{TTT_Gardner} defined by a given animal, two players Alice and Bob alternatively occupy one cell 
in each round of the game (we assume that Alice is the first player), and the first player who occupies a translated copy of the given animal is the winner. By the strategy stealing argument, Bob %the second player Bob 
cannot win in these type of games. Thus, his objective is to obstruct Alice's achievement. %As the game might last forever, his only way to win is to convince Alice that he has a winning strategy, so that she gives up.
%Many results are known for these kind of games. 

%In this paper, 
Here we present a new achievement game called  
the {\em cannibal animal game}. 
As with Harary's game, it is played on the infinite grid
whereby players alternate turns to occupy free cells of the grid. %that are not previously occupied.
This means that in each round the player must choose grid cells that are not yet occupied. Once a cell is occupied, it remains so until the end of the game. %
%of each round do not intersect with the existing occupied regions. 
 In contrast to the generalized Tic-Tac-Toe, the cannibal animal game is a %part of 
\emph{partial game}: 
the roles and legal moves of Alice and Bob are different. % between Alice and Bob. 
Alice's legal move is to occupy one cell of the infinite grid in each round, and 
she wins if she occupies a translated copy of an animal given beforehand 
(this move is the same as that of the first player of Harary's generalized Tic-Tac-Toe). 
Bob's role and allowed moves, however, are different: 
in each round he must occupy a copy of the given animal (i.e., occupy a subset of the grid cells),
and his objective is to prevent Alice from achieving the animal. Neither Alice nor Bob's moves are allowed to overlap with already occupied regions, even partially. The animal achieved or that Bob occupies may be a translation, a mirror image and/or a $90$, $180$, or $270$-degree rotation of the given animal.
Each such translation/rotation/reflection is called a \emph{copy} of the animal. % 
 Figure~\ref{fig_el} shows an example of the progress of the game where the animal is \emph{El}, an L-shaped triomino.

%For example, if the given animal is an L-shaped polyomino (see Figure~\ref{fig_el}~(a)),
%Figure~\ref{fig_el}~(b) illustrates one example of a progress of the game.   
%
\begin{figure}[h]
\center
\includegraphics[width=0.4\textwidth]{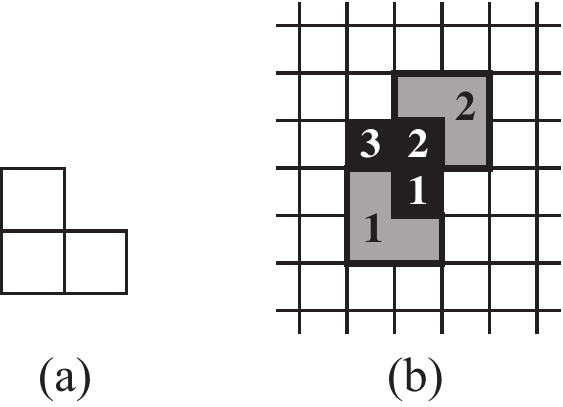}
\caption{(a) The animal El (an L-shaped triomino), (b) An example of the progress of the game: 
cells depicted in black are occupied by Alice, and animals depicted in gray correspond to Bob's moves. 
In both cases, the numbers on the cells represent the order in which the cells are occupied. 
In the example, Alice wins.}
\label{fig_el}
\end{figure}

Any animal of $n$ cells is called 
an {$n$-cell-animal} (alternative, we refer $n$ as the {\em size} of the animal).  Also, let $[x_{\min},x_{\max}]\times [y_{\min},y_{\max}]$ be the rectangular region defined by the corner cells $(x_{\min},y_{\min})$, $(x_{\max},y_{\min})$, $(x_{\min},y_{\max})$, and $(x_{\max},y_{\max})$. We call an animal a %{\em loser} or  
{\em cannibal} or a \emph{loser} if Bob has a winning strategy 
(Bob's animal eats Alice's animal) and %{\em winner} or 
a {\em non-cannibal} or a \emph{winner} otherwise. 
And hence the game is called the \emph{cannibal animal game}. The region in which Alice and Bob place their pieces will be called {\em board} and {\em grid} indistinctively.
%So we call such games {\em Cannibal Animal Games}.

%\paragraph{Definitions}
\iffalse
The game occurs in an infinitely large board in which players alternate turns playing pieces. 
In each turn, the first player (or Alice) locates a unit square (cell), 
while the second player (or Bob) locates a copy of a previously given animal in the board. 
The only constraint is that whenever any player locates a new piece, 
it cannot intersect (even partially) with already placed pieces. 
That is, the new pieces must be located at empty regions of the board. 
Once a piece is placed, it remains in that place until the game is finished.

The animal will be accepted in 90, 180 or 270 degree rotations 
and, if it is asymmetrical, 
in either of its mirror-image forms. Each of these translated/rotated and or mirror imaged polyominoes is called a {\em copy} of the animal. 
%We treat animals  (polyominoes) as physical pieces. 
%Hence, a player can rotate (90, 180 or 270 degrees) and/or perform symmetries on the 
%$X$ and/or $Y$ axis before placing an animal in the board. 
%This corresponds to the case in which player B is rotating and/or 
%turning a physical piece before placing it on the board. 
%Note that an animal $A$ is a polyomino, and thus it is connected in the 4-neighbor topology. 
\fi

\paragraph{Our Results.}
In this paper we study the following animals 
(see Figure~\ref{fig_shapes} for examples): 
$R(n,m)$ %(or $n \times m$ rectangle) 
is an $n \times m$ rectangle. %The polyomino $R(n,n)$ is sometimes expressed as $S(n)$ (or an $n \times n$ square). 
We also define $O(n,m,k)$ (for $n,m\in\mathbb{N}$ and $k < \min \{ n/2, m/2 \}$) as 
a $2k(n+m-2k)$-cell-animal having the shape of $R(n,m)$
but with a $(n-2k) \times (m-2k)$ rectangular hole in the center 
(that is, an an O-shaped animal of thickness $k$). 
Animal $U(h,w,k)$ (for $h\geq 2$, $w \geq 3$, and $k < \min \{ h, w/2 \}$)
is defined as a $k(2h+w-2k)$-cell-animal having a $U$-shape with height $h$, width $w$, and thickness $k$. 
The $L(n)$ animal (for any $n\in \mathbb{N}$) consists of the concatenation of $n$ copies of the El-animal, translated horizontally so that they touch, but do not overlap. 

 %$X(n)$ (or $n$-cross) is a $(2n-1)$-cell-animal consisting of 
%vertical and horizontal cells, each of length $n$, that cross 
%each other at the center cells.  
%
\begin{figure}
\center
\includegraphics[width=0.6\textwidth]{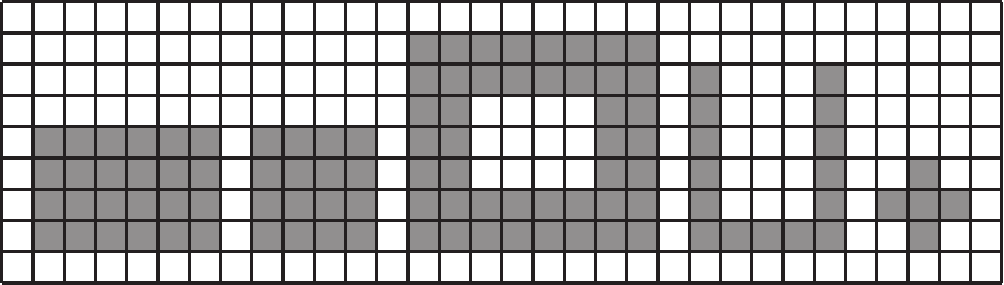}
\caption{Examples of animals: 
$R(4,6)$, $R(4,4)$, $O(7,8,2)$, $U(6,5,1)$, and $L(2)$ (from left to right).}
\label{fig_shapes}
\end{figure}

%The results of this paper are the following: %This paper gives the following results: 
\begin{enumerate}
\item The following animals are cannibals:% ($=$ losers):
\begin{enumerate}
\item $R(n,n)$ with holes if at least one of the holes is at least 
$\lfloor n/4 \rfloor$ cells away from the boundary for $n \geq 4$ 
(and no hole is on the boundary)
\item $O(n,m,k)$ for $n,m\in \mathbb{N}$, and $k < \min \{ n/2, m/2 \}$ 
\item $U(h,w,1)$ for $h\geq 2$, $w \geq 3$, except $U(2,4,1)$
\item $L(n)$ for $n \geq 2$
%(except $U(2,4,1)$, which is open).
\end{enumerate}
\item The following animals are non-cannibals:% ($=$ winners):
\begin{enumerate}
\item Animals with at most three cells 
\item $R(n,m)$ for any $n,m \in \mathbb{N}$
%\item $X(3)$.
\end{enumerate}
\end{enumerate}

%We start by studying the difficulty of the game as a function of the size of the polyomino. %Given a polyomino $P$, let $|P|$ be the its size (i.e., the number of cells it occupies).

%%%%%%%%%%%%%%%%%%%%%%%%%%%%%%%%%%%%%%%%
% NON-CANNIBAL SECTION
%%%%%%%%%%%%%%%%%%%%%%%%%%%%%%%%%%%%%%%%

\section{Non-cannibal animals (winners) and the bounding strategy}\label{sec_win}
In this section we present some non-cannibal animals. First we start by observing that when an animal is small, Alice can easily win.

\begin{obs}\label{obs_three} 
Any animal $P$ of three or fewer cells is non-cannibal.
\end{obs}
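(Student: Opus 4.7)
The plan is a case analysis on $n = |P| \in \{1, 2, 3\}$: in each case I exhibit an explicit strategy for Alice that generates two simultaneous one-move completions into a copy of $P$ that Bob's single copy of $P$ cannot jointly block. The sizes $n = 1, 2$ are immediate: for $n = 1$ Alice wins on her opening move, and for $n = 2$ the only animal is the domino, so Alice opens at $(0,0)$ and, since the four grid-neighbours of $(0,0)$ are pairwise non-adjacent, Bob's single domino $T_1$ covers at most one of them; Alice completes a domino on her second move with any of the (at least three) remaining free neighbours.

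For $n = 3$, $P$ is either the straight (I) or the bent (L) tromino, and in both subcases Alice opens with $P_1 = (0,0)$ and then picks a neighbour $P_2$ of $P_1$ so that $\{P_1, P_2\}$ admits two one-move completions not jointly coverable by any single copy of $P$. For the I-tromino, Alice picks an axis direction $d$ such that $P_1 + d$, $P_1 + 2d$, and $P_1 - d$ are all free, and plays $P_2 = P_1 + d$. Her two one-move completions are then $P_1 + 2d$ and $P_1 - d$, which lie at distance $3$ on a common line, whereas every I-tromino has diameter $2$, so $T_2$ blocks at most one and Alice wins on her third move. Such a $d$ always exists: since $T_1$ is three cells collinear along a single row or column, the critical cells it can meet lie along only one axis through $P_1$, and both directions along the other axis therefore have all three of their critical cells free.

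The L-tromino subcase is the main technical obstacle. The four neighbours of $(0,0)$ are pairwise non-adjacent, so no L-tromino contains three of them, and $T_1$ covers at most two; Alice picks $P_2$ as a free neighbour. The pair $\{P_1, P_2\}$ has exactly four L-completions, and a brief enumeration shows that the only pairs of these jointly coverable by a single L-tromino are the two \emph{same-side} adjacent pairs; any other pair would require a third cell that is either $P_1$ or $P_2$, both of which are already occupied. Consequently $T_2$ covers at most two of the four completions, and the remaining work is to check that $T_1$ and $T_2$ together cannot cover all four, for a suitable adaptive choice of $P_2$. I would organise this by the number of neighbours of $(0,0)$ covered by $T_1$ (zero, one, or two), in each case picking $P_2$ from the free neighbours so that at least three L-completions survive $T_1$; then $T_2$'s at-most-two coverage leaves Alice a winning third move.
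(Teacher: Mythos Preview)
The paper does not prove this observation at all; it is simply asserted as evident. So there is no ``paper's proof'' to compare your argument against, and your case analysis stands on its own.

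Your proposal is essentially correct. The cases $n=1,2$ and the I-tromino are complete and sound. One small imprecision in the I-tromino paragraph: it is not literally true that the critical cells $T_1$ can meet ``lie along only one axis through $P_1$'' --- a horizontal $T_1$ in row $r\neq 0$ can touch the $y$-axis cell $(0,r)$. What \emph{is} true, and what you actually need, is that a horizontal $T_1$ either sits in row $0$ (blocking only $x$-axis critical cells, so both vertical $d$ work) or in some other row (blocking no $x$-axis critical cells, so both horizontal $d$ work); and symmetrically for vertical $T_1$. The conclusion that a good $d$ always exists is correct.

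For the L-tromino you only sketch the finish, but the sketch is sound and the case split you propose does work. Concretely: if $T_1$ covers two axis-neighbours of the origin they form a diagonal pair, say $(1,0)$ and $(0,1)$, forcing $T_1=\{(1,0),(1,1),(0,1)\}$; for either remaining choice $P_2\in\{(-1,0),(0,-1)\}$ exactly one completion is blocked by $T_1$, and since the three survivors have Manhattan diameter $3$, $T_2$ cannot cover all of them. If $T_1$ covers exactly one axis-neighbour, take $P_2$ to be the opposite neighbour; then all four completion cells lie at Manhattan distance $\geq 3$ from the covered neighbour and $T_1$ blocks none of them. If $T_1$ covers no axis-neighbour, it meets at most one diagonal neighbour of the origin (two diagonal neighbours would force an axis-neighbour between them), and choosing $P_2$ away from that corner again leaves all four completions intact. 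In every case at least three completions survive $T_1$, $T_2$ kills at most two (your same-side analysis), and Alice wins on move three. Filling in these few lines would make the L-case as clean as the rest.
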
 
We also conjecture that polyominoes of size $4$ are all non-cannibal, but we have not been able to construct a winning strategy for all of them. 
%In Section \ref{sec_win} we give a winning strategy for rectangles. Other than rectangles, there are four polyominoes of size $4$ or less (\mati{Are there names for them?Like the El-polyomino?}). Winning strategies for these four polyominoes are a simple case analysis of at most 6 moves. Since the case analysis provides no insight of other polyominoes, we omit the details. 
%\mati{it would be nice if we could put the strategies somewhere. Sakaidani-kun studied these, right? Do you have this result written somewhere?}
%
%
 In the following we give winning strategies for Alice for the case in which the polyomino is a rectangle.
 %several polyominoes. We start with a general winning strategy for rectangles. %families of animals. 
%
%We first introduce a concept on common intersection: 
%\begin{defi}
%An animal $P$ 
%satisfies the {\em common intersection property} 
%is called {\em 2-Helly} \cite{handbookDCG97}
%if for any family 
%$\mathcal{A}$ of copies of $P$ such that $A \cap A' \neq\emptyset$ for any $A, A' \in \mathcal{A}$, 
%the intersection $\bigcap_{A \in \mathcal{A}} A$ is nonempty.
%\end{defi}

%Note that this property is equivalent to having {\em Helly number} $2$. 
%It is easy to see that $R(n,m)$ for any $n$ and $m$ is 2-Helly, while none of the other animals that we study are.
 
\begin{theorem}\label{theo_rect} 
$R(n,m)$ is a non-cannibal (for any $n,m\in \mathbb{N}$).
\end{theorem}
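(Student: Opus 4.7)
The plan is for Alice to exhibit an explicit winning strategy that uses her resource advantage: she places any single cell she likes per round, while Bob is constrained to playing entire copies of $R(n,m)$. Even though Bob consumes $nm$ cells per turn compared to Alice's $1$, Alice only needs to assemble a single $n\times m$ configuration, which should let her exploit the large geometric freedom of the infinite grid.

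The central idea is the bounding strategy hinted at by the section title. First I would have Alice fix a large square region $D=[0,N]^{2}$ with $N$ chosen much larger than $nm$, and inside $D$ designate a collection of pairwise-disjoint candidate target rectangles $T_1,\dots,T_k$ of size $n\times m$, placed at pairwise distance at least $n+m-1$ in both coordinates. Such separation guarantees that every placement of Bob's rectangle intersects at most one $T_i$, so Bob can spoil at most one target per turn. Since Alice needs to complete only a single $T_i$, she enjoys a combinatorial advantage as soon as $k$ is taken of order $nm$.

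The strategy itself has two layers. In the first layer Alice plays a set of \emph{bounding cells} around each $T_i$ designed so that no legal copy of $R(n,m)$ can overlap $T_i$ at all. In the square case $n=m$ this is particularly clean: since every $n\times n$ rectangle overlapping $T_i$ must contain one of the four corners of $T_i$, it suffices for Alice to place one cell at each corner to rule out every spoiling placement. In the general case, Bob's rectangle admits two orientations ($n\times m$ and $m\times n$), so a slightly larger bounding set (a few cells at and immediately outside the corners of $T_i$) is needed, but the same kind of case analysis shows $O(1)$ cells suffice per target. In the second layer, once $T_i$ has been fully bounded, Alice fills in the $nm$ interior cells of $T_i$ at her leisure, Bob being unable to touch it, and completes her copy.

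The main obstacle is choreographing these two layers against Bob's spoiling moves. While Alice is installing bounding cells around $T_i$, a single Bob placement inside $T_i$ ruins it before the protection is complete. The fix is to work on the bounding cells of all $k$ targets \emph{in parallel}, so that Bob must split his attention across many partially-protected targets. The endgame is then a pigeonhole accounting: with $k$ of order $nm$ and $N$ of order $k(n+m)$, after $O(knm)$ rounds Alice has installed every bounding cell of every target while Bob has spoiled at most $O(knm)$ targets---and since each Bob move can spoil at most one $T_i$, a careful potential-function argument (tracking, for instance, the number of still-unspoiled targets with all bounding cells in place) forces at least one $T_i$ to become fully bounded before being spoiled. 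That $T_i$ is then safe and Alice wins by filling it. Formalizing the exact order in which Alice plays her bounding and filling cells, and choosing the constants $N$ and $k$ so that the pigeonhole closes, is the main technical hurdle I expect to encounter.
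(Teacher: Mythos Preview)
Your approach has a genuine gap in the endgame accounting. Consider the square case $n=m$, where you propose that the four corner cells of each target $T_i$ suffice to protect it. That geometric claim is correct, but the scheduling fails: with three or fewer corners of $T_i$ occupied, Bob can \emph{always} place a copy of $R(n,n)$ that overlaps $T_i$ while avoiding those corners (e.g.\ if $(0,0),(n-1,0),(0,n-1)$ are placed in $T_i=[0,n-1]^2$, Bob plays $[1,n]\times[1,n]$). Hence Bob's counterstrategy is simply ``spoil whichever target Alice just played on.'' Since your targets are spaced far enough apart that Alice's bounding cells for $T_j$ never interfere with Bob's spoiling move on $T_i$, this counterstrategy is always legal. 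Under it, every target is destroyed while holding at most one bounding cell, so the potential you suggest (number of unspoiled, fully bounded targets) stays at zero forever, regardless of how large $k$ is. The point is that Alice needs $c\ge 2$ moves to protect a target while Bob needs only one to kill it, and separated targets give Alice no way to amortise.

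The paper sidesteps this race entirely. First, Alice spends roughly $4N$ moves occupying the boundary of a single large $N\times N$ square; for $N$ large enough, even after Bob's $4N$ replies there remain many copies of $R(n,m)$ inside that Bob has not touched. This reduces the problem to a finite board. Second---and this is the idea your proposal is missing---on a finite board Alice does not try to protect individual targets. Instead she looks at the set $\mathcal{S}'\subseteq\mathcal{S}$ of Bob-placements that would intersect \emph{every} remaining candidate. These are pairwise-intersecting axis-aligned rectangles, so by the Helly property for boxes they share a common cell $c_{\mathcal{S}'}$; Alice plays there, and with one move blocks all of Bob's ``killing'' options simultaneously. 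Thus $\mathcal{S}$ never becomes empty, the finite board eventually runs out of legal Bob moves, and Alice fills in a surviving candidate. The Helly step is precisely what lets Alice keep pace with Bob using one cell per turn, and it is specific to rectangles (as the paper notes, rectangles are the only $2$-Helly polyominoes).
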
 
%
%Before giving the proof, notice that this claim also completes the proof of Theorem \ref{theo_5more}, since the $1\times n$ is an example of a non-cannibal animal of size exactly $n$ (for any $n\geq 1$). 
To prove the theorem, we first give a strategy for the case in which the board is bounded. This will afterwards be used for the unbounded board.

\begin{lemma}\label{lem_rect} 
In any finite board, the rectangle $R(n,m)$ is non-cannibal provided that at least one copy of $R(n,m)$ can be placed on the empty board.
\end{lemma}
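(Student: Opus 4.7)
The approach is to exhibit an explicit winning strategy for Alice. The crucial fact that I would lean on first is the finite board: since each Bob move consumes exactly $nm$ cells and cells are never released, Bob can make at most $\lfloor |B|/(nm)\rfloor$ moves throughout the entire game. So Alice need only ``outlast'' a bounded number of Bob rectangles and finish in the remaining empty space.

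The strategy itself is a \emph{bounding strategy}: Alice maintains as an invariant the existence of a \emph{clean target} $T$, by which I mean a placement of $R(n,m)$ on the board that is entirely contained in the union of the empty cells and Alice's own cells (so in particular disjoint from every rectangle Bob has played so far). Initially such a $T$ exists by hypothesis, and Alice plays her first cell inside it. Thereafter she always plays inside her current target: if Bob's most recent move is disjoint from $T$ she keeps it and plays another empty cell of $T$; if Bob's move intersects $T$ she picks a new clean target $T'$ that still contains all her previous cells and no cell of any Bob rectangle played so far.

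The heart of the proof is then the \emph{switching lemma}: whenever Alice needs to change target, a suitable $T'$ really does exist. To ensure this I would have Alice place her cells in positions of $T$ that lie in the overlap of many other translated/rotated placements of $R(n,m)$ (for instance, near an interior cell), so that a single Bob rectangle cannot simultaneously block every candidate for $T'$. Combining this local geometric observation with the global budget above, I would argue that after at most $\lfloor |B|/(nm)\rfloor$ target switches the process stabilizes, at which point no Bob move can intersect the current target anymore and Alice completes the remaining empty cells of $T$ to win.

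The main obstacle I anticipate is precisely the switching lemma: verifying in full generality that after each intrusion by Bob a clean target still exists covering Alice's accumulated cells. I would expect the cleanest route to be a case analysis on the relative position and orientation of Bob's offending rectangle versus $T$, using the fact that two different placements of $R(n,m)$ overlap in a rectangle of controllable size, so that shifting $T$ by one unit in an appropriate direction yields the required $T'$ whenever the board is large enough to contain any slot at all.
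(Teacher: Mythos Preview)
Your switching lemma is not just the main obstacle---it is false as stated. Take $R(1,3)$ on a $1\times N$ strip with $N$ large. Alice picks $T=\{49,50,51\}$ and, following your heuristic, plays the interior cell $50$. Bob replies with $\{47,48,49\}$; the only clean target through $50$ is now $T'=\{50,51,52\}$, so Alice switches and plays its interior cell $51$. Bob then plays $\{52,53,54\}$. The only placements of $R(1,3)$ covering both of Alice's cells are $\{49,50,51\}$ and $\{50,51,52\}$, and each meets one of Bob's rectangles: no admissible $T''$ exists. Two Bob moves suffice to destroy your invariant, while Bob's budget $\lfloor N/3\rfloor$ is as large as you like, so the global finiteness bound never enters. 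The same two-sided bracketing adapts to any $R(n,m)$ once Alice has laid down enough cells to pin her bounding box; a ``one-unit shift'' case analysis cannot rescue it.

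The paper's proof sidesteps this by dropping the requirement that the eventual winning copy contain Alice's earlier cells. Alice instead guards the whole family $\mathcal{S}$ of Bob-free placements. In each round she looks at $\mathcal{S}'\subseteq\mathcal{S}$, the set of placements that stab \emph{every} member of $\mathcal{S}$---these are exactly the moves by which Bob could empty $\mathcal{S}$ in one stroke. Because $\mathcal{S}'$ is a family of pairwise intersecting axis-aligned rectangles, the Helly property for such boxes gives a single cell common to all of them, and Alice occupies that cell. Bob is then forced to play outside $\mathcal{S}'$, so $\mathcal{S}$ survives his move. Finiteness of the board eventually leaves Bob with no legal move, and Alice fills in any surviving $s\in\mathcal{S}$. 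The Helly step is the idea your plan lacks, and it is also what ties the argument specifically to rectangles.
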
 
\begin{proof}
At the beginning of each round we define $\mathcal{S}=\{s_1,\ldots, s_k\}$ as the set of copies of $P$ 
not occupied by Bob that fit on the board (note that some of these positions may be occupied by Alice's previous moves).  
The set $\mathcal{S}$ will be treated as a set of potential positions in which Alice may form her animal. 
Note that Bob's moves must be at some $s \in \mathcal{S}$. Also, let $\mathcal{S'}\subseteq \mathcal{S}$ be the set of animals that stab all elements of $\mathcal{S}$ (that is, $s'\in\mathcal{S'} \Leftrightarrow s'\cap s\neq \emptyset$, $\forall s \in\mathcal{S}$). Note that the set $\mathcal{S}$ initially is nonempty at the beginning of the game, and whenever Bob plays, the size of $S$ is reduced. Moreover, the set $\mathcal{S}$ will only become empty if and only if Bob manages to place his copy occupying the cells of some $s'\in \mathcal{S'}$. 

The key observation is the fact that $\mathcal{S'}$ is a collection of pairwise intersecting rectangles, and as such it must have at least a common intersection point $c_\mathcal{S'}$ that intersects all rectangles of $\mathcal{S'}$. Alice's strategy is as follows: if the set $\mathcal{S'}$ is empty, Alice occupies any empty cell of some $s\in \mathcal{S}$. Otherwise, Alice plays at $c_\mathcal{S'}$, preventing Bob from playing at $\mathcal{S'}$. 

With this strategy, Alice makes sure that the set $\mathcal{S}$ never becomes empty 
(since Bob can never occupy $s'\in \mathcal{S'}$). Since the number of Bob's possible moves only decreases after each of Alice's moves, after a finite number of turns Bob will be unable to play inside the bounded board (and Alice will be able to complete a copy of the animal). 
\end{proof}

Observe that the proof of Lemma~\ref{lem_rect} makes no assumptions on the shape of the finite board (other that a copy of $R(n,m)$ fits inside, and  that Alice plays first). In the following we extend this result to an infinite board. The first step in  
Alice's strategy will be to construct a bounded region big enough so that the set $\mathcal{S}$ is nonempty, and then apply the bounded region strategy. 
%If she can construct such a region, she can apply the strategy of Lemma~\ref{lem_rect} by playing only inside this bounded region: 
From this idea we have the proof of Theorem~\ref{theo_rect} as follows: 
\\

\noindent\textbf{Proof of Theorem~\ref{theo_rect} (Bounding strategy).}
%Given $P$, let $n$ and $m$ be the smallest integers such that $P$ is included in $R(n,m)$ 
%(that is $R(n,m)$ is the smallest rectangle that can enclose $P$). 
We construct a region on the board large enough 
that at least one copy of $R(n,m)$ can be constructed 
inside. The objective is to create an $N\times N$ square for a sufficiently large $N$ (the exact value will be determined later). Alice can surround the boundary of the square with at most $4(N-2)$ moves (note that the four corners need not be occupied). 
Let $I$ be the interior of the square. 
Notice that at least $(N-(n-1))(N-(m-1))$ copies of $R(n,m)$  
%(hence, at least so many copies of $P$)
fit inside $I$. 
Each of Bob's animals stabs at most 
$(2n-1)(2m-1)+(n+m-1)^2 \leq n^2 + m^2 + 6nm$ 
copies of $R(n,m)$. 
% (hence at most so many copies of $P$).

During the (at most) $4N$ rounds in which Alice surrounds the boundary of the square, 
Bob can stab
at most $4N(n^2 + m^2 + 6nm)$ animals of $\mathcal{S}$. 
Thus, if $(N-n+1)(N-m+1) > 4N(n^2 + m^2 + 6nm)$, 
the set $\mathcal{S}$ will be non-empty even after Alice has completed 
surrounding the boundary of the square. 
Because the first term is quadratic in $N$ and the second is linear, for a sufficiently large $N$ the inequality holds. 
\hspace*{\fill} $\Box$

The key property of this strategy is the fact that any collection of pairwise intersecting rectangles has a common intersection point. Hence, this approach could be extended to any other animal that also satisfies this property. This property is often referred as the 2-Helly (or simply the Helly) property~\cite{handbookDCG97} in the literature. Unfortunately, in a companion paper~\cite{helly} we show that the rectangle is the only 2-Helly polyomino. 

%\begin{cor}\label{theo_inter} 
%$R(n,m)$ is a non-cannibal (for any $n,m\in \mathbb{N}$). 
%\end{cor} 

Observe that the above strategy might take many moves, since Alice starts by enclosing a large region. In the following, we provide a strategy that uses fewer moves for the particular case in which the animal is an $n\times n$ square. Let $S_n(x,y)$ be the connected square region of the grid %having the dimensions of $R(n,n)$ %that an animal $R(n,n)$ can occupy
such that its bottom-left cell is located at position $(x,y)$;
that is, $S_n(x,y)$ occupy the square region defined by the rectangular region $[x,x+n-1]\times [y,y+n-1]$. 
%(For space limitation, their proofs are omitted.): 

\begin{lemma}
%For $R(n,n)$ and $X(3)$, Alice can win by at most $n^2+3$ and $8$ moves, respectively. 
For any $n>0$, Alice can construct $R(n,n)$ using at most $n^2+3$ moves. 
\end{lemma}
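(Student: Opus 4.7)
The plan is to refine the bounding strategy of Theorem~\ref{theo_rect} into an economical version that uses only a constant number of moves outside the eventual target square, by exploiting the symmetry of the $n\times n$ square.

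First I would fix a candidate target $T=S_n(0,0)$ and consider the set of $(2n-1)^2$ squares $S_n(x,y)$ with $|x|,|y|\le n-1$: these are exactly Bob's ``threatening'' moves, i.e.\ the copies of $R(n,n)$ that overlap $T$. The key combinatorial observation is that every threatening square contains at least one of the four corner cells $(0,0)$, $(n-1,0)$, $(0,n-1)$, $(n-1,n-1)$ of $T$; indeed, a direct case check on the sign of $x$ and the sign of $y$ shows that the $n^2$-block of $(x,y)$-positions covered by each corner tiles $[-(n-1),n-1]^2$. Consequently, as soon as Alice has occupied the four corners of $T$, no threatening square remains legal for Bob, and Alice can fill the remaining $n^2-4$ interior cells with no further interference.

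The challenge is to secure all four corners of $T$ before Bob can play any square that overlaps $T$ too heavily. I would let Alice proceed adaptively: she plays one cell $c$ first, observes Bob's first move $B_1$, and only then commits to a target $T$ with (i) $c\in T$ and (ii) $T\cap B_1=\emptyset$. A simple counting argument shows that among the $n^2$ candidate targets through $c$, at most $(n-1)^2$ overlap $B_1$, so a disjoint target always exists; in particular, Bob's first move becomes a ``free barrier'' that, combined with $c$, already blocks a large portion of the threatening squares for $T$. Alice then plays the three remaining corners of $T$ one by one, and afterwards fills the interior in any order.

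The main obstacle will be the round-by-round verification that, during the three rounds in which Alice is still placing the remaining corners, every legal threatening square either already contains an Alice cell or overlaps $B_1$. In certain adversarial configurations this invariant cannot be maintained using corner moves alone, and Alice must play up to three cells outside $T$, placed so as to stab the specific residual threats not yet covered by corners or by $B_1$. These are the auxiliaries responsible for the ``$+3$'' in the bound, and the core of the proof is to exhibit an explicit choice of such auxiliaries (depending on $B_1$) and to check that at most three of them are ever needed, yielding a total of $n^2+3$ moves.
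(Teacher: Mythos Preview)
Your plan has a genuine gap that, as written, cannot be closed. The step you postpone --- exhibiting three auxiliary cells that protect a target $T$ chosen immediately after Bob's first move --- is not merely unfinished; in that form it fails. If Bob plays $B_1$ far from $c$, then whatever $T\ni c$ you commit to, $B_1$ blocks none of the $(2n-1)^2$ squares threatening $T$. After you place a second corner of $T$ (say you now hold $(0,0)$ and $(n-1,n-1)$), there are still $2(n-1)^2$ threatening squares avoiding both corners --- for instance $S_n(1,-(n-1))$ --- and Bob is free to play one of them on his second turn. That move places Bob's cells \emph{inside} $T$, so $T$ is destroyed permanently; no later ``auxiliary'' cell, inside or outside $T$, can repair a target that already contains a Bob block. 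Hence the commitment after a single round is premature, and the four-corners-plus-three-extras scheme collapses. (As a minor aside, the estimate ``at most $(n-1)^2$ of the $n^2$ targets through $c$ meet $B_1$'' is also wrong --- the count can reach $n(n-1)$ --- though a disjoint target does still exist.)

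For contrast, the paper's argument is organised quite differently. Alice first tries to occupy the four collinear cells $(0,0),(n,0),(2n,0),(3n,0)$, spaced exactly $n$ apart so that no $n\times n$ block fits strictly between consecutive ones; by a symmetry argument she is guaranteed the first two. The proof then branches on whether Bob blocks her at the third cell, at the fourth, or not at all, and in each branch a short explicit analysis uses Bob's own blocking square as one wall of a trap, pinning down an $n\times n$ region that Bob can no longer enter. The final target square is selected only at the end of this opening, once several of Alice's cells and Bob's blocks are already on the board; the ``$+3$'' in $n^2+3$ simply counts the opening cells that happen to fall outside that final square, not a precomputed stabbing set for a target fixed in advance.
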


\begin{proof}
We will describe Alice's strategy for constructing $R(n,n)$: Alice will try to play at 
locations $(0,0)$, $(n,0)$, $(2n,0)$ and $(3n,0)$, forming a horizontal strip 
with none of Bob's pieces (see Figure~\ref{Sn-case1}). By virtually rotating the board we can certify that Bob's first move will be to the left of Alice's. Hence, she can occupy positions $(0,0)$ and $(n,0)$ with her first two moves. 
Alice's strategy now depends on whether Bob allows Alice to play in the third and fourth positions.
\begin{figure}[h]
\center
\includegraphics[width=0.6\textwidth]{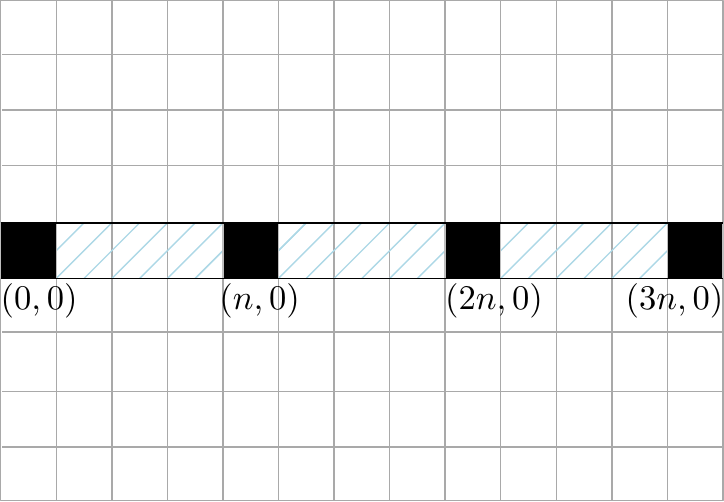}
\caption{Alice's strategy is to occupy locations $(0,0)$, $(n,0)$, $(2n,0)$, and $(3n,0)$.}
\label{Sn-case1}
\end{figure}

\begin{description}
%By rotating the board accordingly, we can assume that Bob's first move is in the halfplane $x<0$, hence Alice can always occupy positions $(0,0)$ and $(n,0)$. 

\item[Two cells occupied] The only case in which Alice cannot occupy position $(2n,0)$ in her third move is if Bob plays his second move occupying position $(2n,0)$. 
We claim that in such a case, Alice will win in at most $n^2 + 2$ moves.
%So suppose Bob plays at a position $(a_x,a_y)$ such that $n+1 \leq a_x \leq 2n$ 
To prove the claim, let $S_n(b_x,b_y)$ be the position in which Bob placed his second move. Observe that we must have 
%a position where the bottom left cell of $R(n,n)$ 
%is at location $(b_x,b_y)$ such that 
 $n+1 \leq b_x \leq 2n$ and $-n+1 \leq b_y \leq 0$.
%Without loss of generality, assume that $b_y + n \geq 0$.
Then Alice plays her third move at position $(b_x-n,n-1)$. 
\begin{figure}[h]
\center
\includegraphics[width=0.6\textwidth]{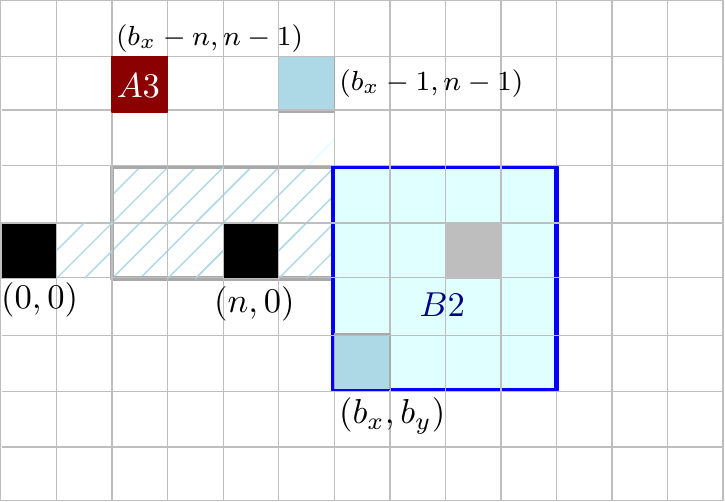}
\caption{If Bob forbids Alice from occupying $(2n,0)$ with his second move $B2$ having its bottom-left cell at $(b_x,b_y)$, 
then Alice plays her third move at $A_3$.
The striped rectangular area is a region that Bob cannot occupy.}
\label{Sn-case2-1}
\end{figure}

At this point, note that the cells inside the rectangular region $[b_x-n,b_x-1]\times[0,b_y+n-1]$ can no longer be occupied by Bob since the (horizontal and vertical) distance between any two of the cells occupied by Alice within this region is less than $n$ (Figure~\ref{Sn-case2-1}). %, the length of $S(n)$.
If $b_y = 0$, this rectangular region defines %Alice wins since $S_n(b_x-n,0)$ is an animal completely occupied by Alice. 
$S_n(b_x-n,0)$ %is the square region defined by the corner cells $(b_x-1,n-1)$, $(b_x,0)$, $(b_x-n,0)$, and $(b_x-n,n-1)$
and since no cell of this square region %$S_n(b_x-n,0)$ 
can be occupied by Bob, then 
%and this region cannot be occupied by Bob%: the horizontal distance between $(b_x-n,n-1)$ (occupied by Alice)
%and $(b_x,b_y+n-1)$ (occupied by Bob) is less than $n$.  Therefore, 
Alice wins by occupying $S_n(b_x-n,0)$ in $n^2 + 1$ moves (Figure~\ref{Sn-case2-1}). 
%is a copy of $S(n)$, and 
%since Bob will never be able to occupy this region then Alice wins.

\begin{figure}[h]
\center
\includegraphics[width=0.6\textwidth]{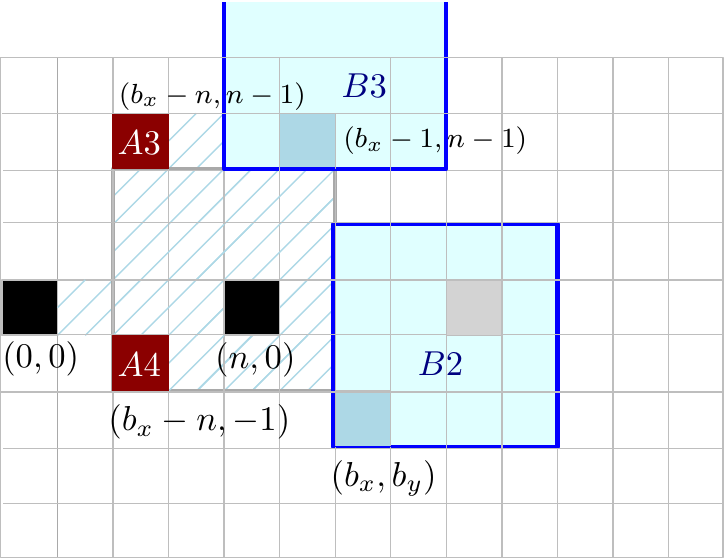}
\caption{If Bob forbids Alice from occupying $S_n(b_x-1,-1)$ with his third move $B3$, 
then Alice plays her fourth move at $A_4$.
The striped rectangular area is a region that Bob cannot occupy.}
\label{Sn-case2-2}
\end{figure}

Otherwise, we have $-(n-1) \leq b_y < 0$. Notice that, if Alice plays at positions $(b_x-1,n-1)$ or $(b_x-n,b_y)$, she can enlarge Bob's forbidden region to a square (regions $S_n(b_x-n,b_y)$ and $S_n(b_x-n,-1)$, respectively). Bob can only occupy one of the two positions in one move. Hence, regardless of Bob's choice, Alice will be able to secure a region large enough to construct a copy of $R(n,n)$.

% and Bob is forced to place his next move so that he occupies the location $(b_x-1,n-1)$ thus forbidding Alice from occupying $S_n(b_x-n,0)$.
%
%If he doesn't, then Alice wins in $n^2+1$ moves by occupying $(b_x-1,n-1)$ and hence forbidding Bob from occupying 
%any cell of $S_n(b_x-n,0)$ as in the previous case.
%Otherwise, if
%Bob occupies $(b_x-1,n)$ with his third move (Figure~\ref{Sn-case2-2}), then Alice plays her fourth move at position $(b_x-n,b_y)$. 
%

%
%This way, the horizontal strip between (and including) the cells $(b_x-n,b_y)$ and $(b_x-1,b_y)$
%can no longer be occupied by Bob;
%therefore, the entire region $S_n(b_x-n, b_y)$ cannot be occupied by Bob, and hence 
%furthermore, the vertical strip between (and including) blocks $(b_x-n, n-1)$ and $(b_x-n,b_y)$ 
%cannot be blocked by Bob since $S(n)$ cannot fit between $(0,0)$ and $(n,0)$.  Therefore, 
Among the four moves, at most two will be outside Bob's forbidden region. Hence, Alice will win using at most $n^2+2$ moves.

%Alice wins %by occupying $S_n(b_x-n,b_y)$ 
%in $O(n^2 + 2)$ moves.
%forming a copy of $S(n)$ in the region defined by 
%$(b_x-1,b_y+n-1)$, $(b_x-1,b_y)$, $(b_x-n,b_y)$, and $(b_x-n,b_y+n-1)$.
%This proves the claim.

\item[Three cells occupied] Now, suppose Alice is able to occupy $(2n,0)$ with her third move, but then Bob 
occupies location $(3n,0)$. As always, Bob's first move must be in the halfplane $x<0$. We also know that another one must be in the halfplane $x>2n$. Hence, Bob can have played at most once in the rectangular region $\mathcal{R}=[0,2n]\times[-n,n]$. Since Alice occupies positions $(0,0)$, $(n,0)$, and $(2n,0)$, Bob's move in $\mathcal{R}$ (if any) must either be strictly above the halfplane $y>0$ (or strictly below). Without loss of generality, we assume that the region $[0,2n]\times [0,n]$ is empty of Bob's moves. In this case, Alice's third move will be $(n,n-1)$ (see Figure~\ref{Sn-case3-1}).
\begin{figure}[h]
\center
\includegraphics[width=0.6\textwidth]{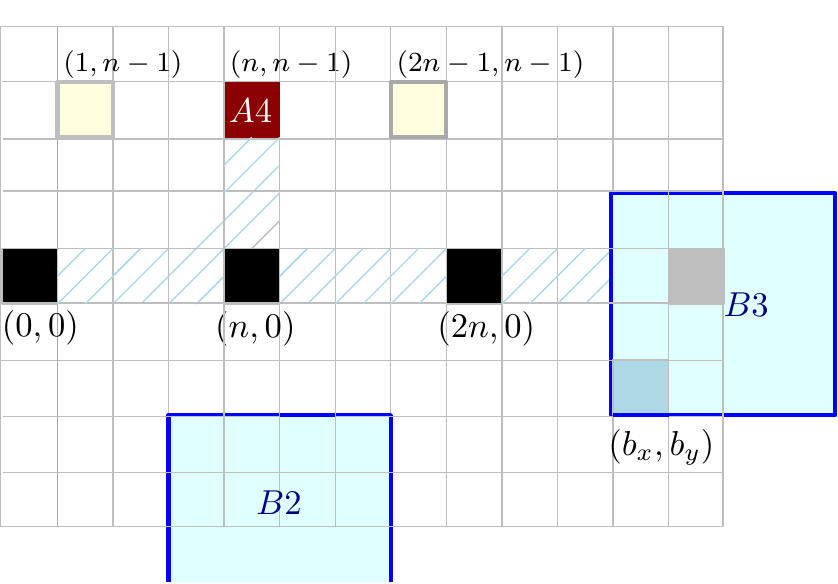}
\caption{If Bob forbids Alice from occupying $(3n,0)$ with his third move $B3$, 
then Alice plays her fourth move in the region above or below the horizontal strip 
that is free of Bob's animals: in this example at $A_4$ is above this strip.
The two cells $(1,n-1)$ and $(2n-1,n-1)$ are cells one of which Alice will try to occupy next in order to win.}
\label{Sn-case3-1}
\end{figure}
Similar to the previous case, Alice can prevent Bob from playing inside an $n\times n$ region by occupying either position $(1,n-1)$ or $(2n-1,n-1)$. Since Bob's fourth move can only block one of the two positions, Alice can play in the other one and construct a copy of $R(n,n)$. 

\iffalse
%In either of the two move she wins: 
in each case, no cell of $S_n(1,0)$ or $S_n(n,0)$ can be occupied by Bob. 
But with his fourth move,
Bob can occupy cells from at most one of $S_n(0,0)$ or $S_n(n-1,0)$ due to Alice's fourth move that occupied $(n,n-1)$. 
Without loss of generality assume that Bob occupies $(0,n-1)$ with his fourth move.
%(since the horizontal distance between the two cells $(0,n-1)$ and $(2n-1,n-1)$
%is greater than $n$); 
%Without loss of generality assume that Bob occupies $(0,n-1)$.
Then Alice wins by occupying $(2n-1,n-1)$ and eventually occupying $S_n(n,0)$ in $n^2+2$ moves.
\fi

\item[Four cells occupied] Finally, assume Alice manages to occupy the four locations $(0,0)$, $(n,0)$, $(2n,0)$, $(3n,0)$. Similar to the case in which three cells were occupied, consider the rectangular region $R'=[0,3n]\times [-n,n]$. %and consider the rectangular region $R'$ defined by the corner cells $(0,-n)$, $(0,n)$, $(3n,n)$, $(3n,-n)$. 

After four turns, Bob can place at most three blocks in $R'$ (recall that Bob's first move is at a location to the left of the vertical line at $(0,0)$). Moreover, Bob cannot occupy any cell of the horizontal strip between $(0,0)$ and $(3n,0)$. Hence, either the region above or below the horizontal strip will contain at most one of Bob's animals. Without loss of generality, we can assume that the upper half has none or one of Bob's pieces.

If this half is empty of Bob's animals, Alice will proceed as in the case where three cells are occupied. Her fifth move will be to play at position $(n,n)$ and afterwards $(0,n)$ or $(2n,n)$ depending on Bob's move. In either of the two cases, $n^2+3$ moves will be sufficient to construct a copy of the square.

\begin{figure}[h]
\center
\includegraphics[width=0.6\textwidth]{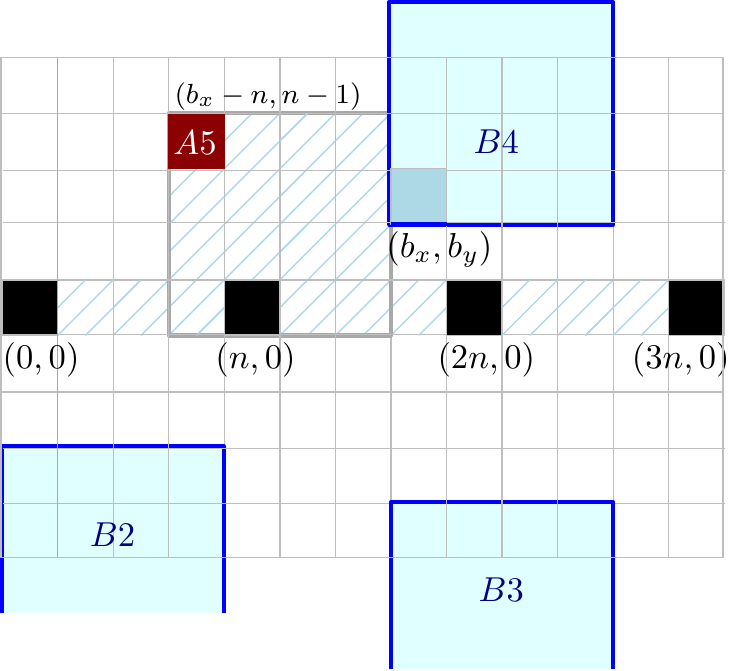}
\caption{If Alice plays the four locations $(0,0)$, $(n,0)$, $(2n,0)$, and $(3n,0)$ then with her fifth move $A5$ 
she will play inside the region above or below the horizontal strip
that has at most one of Bob's animals: in this example at $A_5$ is above this strip.
The striped rectangular area is a region that Bob can no longer occupy.}
\label{Sn-case4-1}
\end{figure}
Finally, it remains to consider the case in which Bob has placed a single animal $S_n(b_x,b_y)$ for some $0 \leq b_x \leq 3n$ and $0 < b_y \leq n$. Additionally, we assume that $b_x\geq n$ (if necessary, we can flip the board vertically to obtain this property, see Figure~\ref{Sn-case4-1}). In this case, Alice plays her fifth move at position $(b_x-n, n-1)$. This move will prevent Bob from occupying any position of the region $S_n(b_x-n,0)$. Moreover, only three out of the five moves of Alice have been placed  outside $S_n(b_x,b_y)$, hence Alice wins again in $n^2+3$ moves.
%Alice can forbid has Now, consider the cells $(n, n-1)$ and $(2n, n-1)$. %, $(n-1, -(n-1))$, and $(2n-1, -(n-1))$.
%$S_n(b_x,b_y)$ occupies at most one of these cells %each of these four locations one move at a time 
%since they are more than $n$ cells apart. 
%Again, without loss of generality assume $(n,n-1)$ is not occupied by Bob. 
%which remains unoccupied. 
%Observe now that Bob cannot block any of the cells inside $S_n(b_x-n,0)$:
%the distance between consecutive corners of this region is less than $n$.
%Thus, Alice wins again in $n^2+3$ moves. 
\end{description}
%Therefore, with this strategy Alice can always win in at most $n^2+3$ moves.
\end{proof}

%, in either of the two cases, 
%Notice that with the above strategy, Alice constructs the square using at most $|S(n)|+3$ moves, where $|P|$ denotes the size of $P$. 
Observe that $n^2$ is a trivial lower bound on the number of moves of any winning strategy or $R(n,n)$. Our strategy only uses at most 3 additional moves, which leads us to believe that our strategy is optimal (in the sense that no other strategy can construct $R(n,n)$ with fewer moves).
%both methods are 
%\begin{proof}
%Trivial and omitted. 
%\end{proof}

%%%%%%%%%%%%%%%%%%%%%%%%%%%%%%%%%%%%%%%%
% CANNIBAL SECTION
%%%%%%%%%%%%%%%%%%%%%%%%%%%%%%%%%%%%%%%%
\section{Cannibal animals (losers) and pairing strategy}

In this section we demonstrate
%For each of the studied cases we give 
several strategies for Bob that prevents Alice from winning. %from forming a given animal. 
By Observation \ref{obs_three}, The game becomes more interesting when the animal has 5 or more cells, since we will show that there exist both winning and losing polyominoes. 

We start by using the well-known concept of \emph{pairing strategy}. We note that this strategy has been successfully used in many other 
combinatorial games~\cite{tttsurvey}. % In what follows we will show that this approach also works for our cannibal animal game. 
We start with a simple strategy for Bob that works for the $O(n,m,k)$ animal: 

\begin{figure} 
\center
\includegraphics[width=0.6 \textwidth]{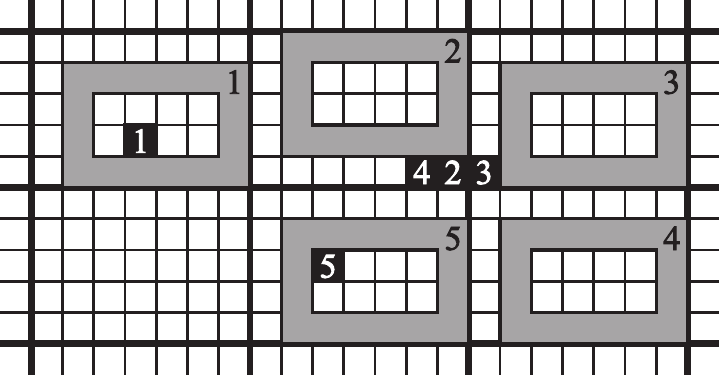}
\caption{Winning strategy for Bob for $O(n,m,k)$  
(in this example, for $O(4,6,1)$).
Alice's moves are marked in black and Bob's in gray. 
The numbers on the cells represent the order in which the cells are occupied.
%The numbers depict the order of moves. 
Since the block inside which Alice's $4$th move is played
already includes Bob's animal, Bob's $4$th move is 
played in another arbitrary block.
}
\label{fig_O}
\end{figure}

%\begin{lemma}\label{lem_ring} 
\begin{theorem}\label{th_ring} 
$O(n,m,k)$ is a cannibal for any $n,m\geq 3$ and $k < \min \{ n/2, m/2 \}$.
\end{theorem}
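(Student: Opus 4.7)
The plan is to exhibit a pairing strategy for Bob based on a tiling of the infinite grid by disjoint translates of $R(n,m)$. First I would set up the tiling so that in each tile $T$ the boundary ring of thickness $k$ is a canonical copy $\hat O(T)$ of $O(n,m,k)$, with the $(n-2k)\times(m-2k)$ interior of $T$ playing the role of a hole. Bob's response rule would be: whenever Alice places a cell inside a tile $T$, Bob plays $\hat O(T)$ if that placement is still legal; otherwise Bob plays $\hat O$ of some previously untouched tile, of which infinitely many always exist on the infinite board. Figure~\ref{fig_O} illustrates the strategy on $O(4,6,1)$.

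The core of the proof is to show that Alice can never complete a copy of $O(n,m,k)$. Suppose for contradiction she does, and let $B'$ be the $n\times m$ (or, after a $90^\circ$ rotation, $m\times n$) bounding rectangle of her copy. If $B'$ coincides with a tile $T$, then the frame of $B'$ equals $\hat O(T)$; provided Alice's first move in $T$ lay in the hole, Bob played $\hat O(T)$ on that round, so Alice cannot own those cells. Otherwise $B'$ is offset from the tiling in at least one coordinate and overlaps at least two tiles. Here I would use the following geometric observation: each side of the frame of $B'$ is a straight segment of $n$ or $m$ consecutive cells, and such a segment cannot lie entirely inside the width-$(n-2k)$ holes of a single strip of tiles (because $n-2k<n$ and $m-2k<m$ under the hypothesis $k<\min\{n/2,m/2\}$). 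Hence at least one cell of the frame of $B'$ must fall on the boundary of some tile $T'$ that $B'$ overlaps, and this cell belongs to $\hat O(T')$, which Bob placed on Alice's first (hole) move in $T'$ --- a contradiction.

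The main obstacle is the corner case where Alice's first move in some tile $T$ itself lies in the frame of $T$, preventing Bob from legally playing $\hat O(T)$ on that round. To handle it I would refine the ``arbitrary'' auxiliary rule so that Bob instead plays a carefully chosen shifted copy of $O(n,m,k)$ whose frame covers as many cells of $T$'s frame as possible without colliding with Alice's newly placed cell or with previously placed pieces. A bookkeeping argument would then show that Alice cannot accumulate all frame cells of $T$ in time: each spoiling move commits her to a single cell of $T$'s frame, while Bob's shifted $\hat O$'s in the neighbourhood of $T$ cover long strips of it, and the geometric argument of the previous paragraph applies to any shifted $B'$ she may try to form instead. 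Once this subcase is settled, the pairing strategy is winning for Bob and $O(n,m,k)$ is a cannibal.
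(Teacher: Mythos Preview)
Your plan diverges from the paper at the very first choice---the tile size---and that choice is what creates the obstacle you never actually resolve.

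The paper partitions the board into blocks of size $(n+k)\times(m+k)$, not $n\times m$. With $k$ cells of slack in each direction, after Alice's first move in a block Bob can \emph{always} place a copy of $O(n,m,k)$ inside that same block: he slides the $n\times m$ bounding box so that Alice's single cell lands either outside it or inside its central $(n-2k)\times(m-2k)$ hole. Hence the ``corner case'' you isolate simply never occurs, and no shifted copies or bookkeeping are needed. The paper's endgame is also different from yours: rather than tracing where the sides of Alice's putative frame meet tile frames, it observes that once every touched block carries one of Bob's rings, every rectangular region free of Bob's cells has height or width at most $2k<\min\{n,m\}$, so no $n\times m$ bounding box for Alice's copy can fit.

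With your $n\times m$ tiling the difficulty you flag is not a corner case but the crux. Your own contradiction argument already relies on the assumption that Bob succeeded in playing $\hat O(T')$, i.e.\ that Alice's first move in $T'$ was a hole move; but the very cells you want to hand to Bob are frame cells of $T'$, and one of them may well have been Alice's opening move there. So the ``bad'' case is exactly the case your main argument needs. Your proposed fix---a ``carefully chosen shifted copy'' plus an unspecified bookkeeping argument---is not a strategy: once Bob's rings straddle tile boundaries, the tiling structure your geometric claim leans on is gone, and nothing prevents Alice from claiming the frame of a single tile cell by cell while Bob's shifted rings collide in the neighbourhood. Enlarging the blocks to $(n+k)\times(m+k)$ removes the problem in one stroke; that is the idea your plan is missing.
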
 
\begin{proof}
Bob virtually partitions the playing-board into blocks of size $(n+k)\times (m+k)$. 
That is, we define the block $B_{ij}$ as the rectangle $[i(n+k),(i+1)(n+k)-1]\times [j(m+k), (j+1)(m+k)-1]$ (as shown in Figure~\ref{fig_O}). The strategy for Bob is to place his animal inside the block where Alice played her last move. 
After Alice plays, Bob checks which block her last move belongs to; 
if he has already played an animal in the same block, he simply plays in an arbitrary empty block 
(e.g., Bob's $4$th move in Figure~\ref{fig_O}). 
Note that since the playing board is infinite, Bob can always play these moves. %do as this in any time. 
%Otherwise, Bob will surround Alice's last move 
%(if on the interior of the cell) or adjacent to it (otherwise, see Figure~\ref{fig_O}). 
Further note that, with this strategy any rectangular region free of Bob's pieces has either height or width at most $2k$. Since $k < \min \{ n/2, m/2 \}$, Alice will never be capable of constructing a copy of $O(n,m,k)$.
\end{proof}

This pairing strategy can also be applied to other animals, such as the $L(n)$. 

\begin{theorem}\label{th_l} 
For any $n \geq 2$, the $L(n)$ animal is cannibal. 
\end{theorem}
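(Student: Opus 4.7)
The plan is to use a pairing strategy similar to Theorem~\ref{th_ring}, but with blocks whose dimensions match the bounding box of a horizontal copy of $L(n)$. Bob partitions the grid into blocks
\[
B_{i,j}=[2ni,\,2n(i+1)-1]\times[2j,\,2j+1]\qquad(i,j\in\mathbb{Z}),
\]
each of size $2n\times 2$. When Alice plays a cell $c$ in some $B_{i,j}$, Bob proceeds as follows: if he has not yet placed an $L(n)$ inside $B_{i,j}$, he chooses among the four horizontal orientations of $L(n)$ (teeth up or down, teeth at even- or odd-indexed columns of the block) the unique one whose cells do not cover $c$, and places it inside $B_{i,j}$; otherwise he plays in an arbitrary empty block, exactly as in the proof of Theorem~\ref{th_ring}.

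I would first verify that Bob always has a legal move. A horizontal copy of $L(n)$ exactly fills $B_{i,j}$ minus a set of $n$ cells that all lie in a single row and share a single column parity; the four horizontal orientations correspond precisely to the four possible (row, parity) choices for this missing set, so every cell of $B_{i,j}$ is avoided by exactly one of them. Hence for any $c$ Alice plays, Bob has a valid placement inside the block, and an empty block always exists on the infinite board when Bob cannot play in $B_{i,j}$.

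The key invariant is that right after Bob responds in $B_{i,j}$, one row of the block is entirely occupied by Bob's base ($2n$ cells), while the other row has its $n$ teeth cells occupied by Bob and its $n$ remaining cells (including Alice's original cell) all share a single column parity. In particular, in every block Alice has already played in, no row contains $2n$ consecutive non-Bob cells, and every column of the block contains at least one Bob base cell.

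I would then rule out Alice's copy of $L(n)$ by case analysis on its placement. A horizontal copy, whether it fits in a single block or straddles two horizontally or vertically adjacent blocks, needs its $2n$-cell base to lie in one row; since Alice must play in every block the base visits, each such block has been answered by Bob, and the base row is either entirely Bob's base or contains $n$ of Bob's teeth at the wrong parity, so the base cannot be free. A vertical copy of $L(n)$ has a $2n$-cell base column spanning $n$ vertically stacked blocks; in each of these blocks exactly one of the two cells in any fixed column is a Bob base cell, so the base column contains at least $n$ cells of Bob. The main obstacle is keeping the parity bookkeeping straight across all straddling configurations, but once every case is checked, Bob's strategy prevents Alice from ever completing $L(n)$, and $L(n)$ is cannibal for every $n\geq 2$.
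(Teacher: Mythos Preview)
Your proof is correct and follows essentially the same pairing strategy as the paper, with the identical $2n\times 2$ block partition and the same observation that a suitably reflected copy of $L(n)$ fits around any single Alice cell in a fresh block. The only difference is in the final step: the paper simply asserts that Alice can never assemble a connected region of five or more cells, whereas you argue directly about the base row/column of a putative copy of $L(n)$; one small caution is that a vertical base column may span $n+1$ blocks rather than $n$, but since at least one of those blocks still contains two stacked base cells (hence a Bob cell in that column), your conclusion stands.
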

\begin{proof}
The proof is analogous to the proof of Theorem~\ref{th_ring}. This time we partition the board into blocks of size $2n\times 2$. Observe that, if Alice plays in an empty block, Bob also can place a copy of $L(n)$ in the block (with the appropriate reflection).
%Let $(i,j)$ be coordinates within a block in which Alice plays (observe that $0\leq i\leq 2n$ and $j\in \{0,1\}$). If $j=1$ and $i=2q+1$ (for some $0\leq q<n$), Bob can place a translated copy of $L(n)$ within the same block. If $j=1$ and $i=2q$ for some $0\leq q<n$, Bob can use a horizontally flipped copy of $L(n)$ instead. Analogously, if $j=0$ a vertically flipped copy of  $L(n)$ can be placed instead. 
%That is, regardless of where Alice plays, Bob can also place a copy of $L(n)$ in the same block.
 With this strategy, it is easy to see that Alice will not be able to create a connected polyomino of size five or larger. In particular, she will not be capable of constructing any copy of $L(n)$ (other than $L(1)$). Recall that by Observation \ref{obs_three}, the $L(1)$ is cannibal. Hence, no pairing strategy can work for $L(1)$.   
\end{proof}

In Section \ref{sec_win} we showed that squares are non-cannibals. Surprisingly, the removal of a single interior cell from a square animal can transform it into a cannibal.

\begin{lemma}\label{lem_squa2} 
For any integer $n \geq 4$, let $A$ be the $R(n,n)$ animal in which a single interior cell whose distance to the boundary 
is at least $\lfloor n/4\rfloor$ units has been removed.  Then $A$ is a cannibal.  
\end{lemma}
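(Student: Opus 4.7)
The plan is to construct an explicit pairing-style winning strategy for Bob, in the spirit of the proof of Theorem~\ref{th_ring}. I would virtually partition the infinite board into disjoint $n \times n$ blocks aligned with a coordinate grid. The key observation is that, by hypothesis, the hole of $A$ is at distance at least $\lfloor n/4 \rfloor$ from the boundary of its $n \times n$ bounding box; consequently, when one considers the (up to) eight rotations and reflections of $A$, the possible hole positions inside an $n \times n$ bounding box are all located in the central sub-grid $[\lfloor n/4 \rfloor,\, n-1-\lfloor n/4 \rfloor]^2$ of that bounding box.

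Bob's response to each move of Alice is defined so that the cell Alice just played becomes the hole of Bob's new copy of $A$. Concretely, let $c$ be Alice's move and let $(x,y)$ be its relative position inside a block $B$. If $(x,y)$ already lies in the central sub-grid of $B$, Bob selects the symmetry of $A$ whose hole matches $(x,y)$ and places the corresponding copy with bounding box equal to $B$, thereby saturating $B$. If instead $(x,y)$ lies in the boundary ring of width $\lfloor n/4 \rfloor$ of $B$, Bob places a shifted copy whose $n \times n$ bounding box is a translate of $B$ by at most $\lfloor n/4 \rfloor$ towards the near side; the shift and the symmetry are chosen so that the hole still lands on $c$, which is always possible because the hole can lie anywhere in the central $(n-2\lfloor n/4 \rfloor)^2$ region of a bounding box.

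To conclude, I would argue by induction on the number of rounds that (i) Bob's move is always legal, and (ii) at every moment, each $n \times n$ window of the board contains at most a bounded (in fact $O(1)$) number of Alice's cells, and each of those cells is the hole of some copy of $A$ already played by Bob. Statement (ii) is exactly what rules out any copy of $A$ for Alice, since $A$ has $n^2 - 1$ cells, which is far more than a constant number. The main obstacle is statement (i) in the shifted case: two shifted animals placed in reaction to consecutive moves near the same block boundary could a priori conflict. Handling this requires reserving, for each block, a strip of width $\lfloor n/4 \rfloor$ along each side into which shifts coming from the neighbouring block are allowed to extend, and then verifying that these strips remain empty until used. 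The value $\lfloor n/4 \rfloor$ enters in an essential way here, being simultaneously the maximal shift Bob ever needs and the half-width of the complement of the central region of $A$, so that the reservations made along one boundary match exactly the shifts coming from the opposite one.
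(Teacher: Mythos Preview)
Your proposal has a genuine gap at its core. You write that if Alice's cell $(x,y)$ lies in the central sub-grid $[\lfloor n/4\rfloor,\,n-1-\lfloor n/4\rfloor]^2$, then ``Bob selects the symmetry of $A$ whose hole matches $(x,y)$.'' But the eight symmetries of $A$ produce at most eight hole positions inside the $n\times n$ bounding box, not the entire central sub-grid. For instance, if $n=8$ and the hole of $A$ sits at $(3,3)$, then the symmetry orbit of the hole is $\{(3,3),(3,4),(4,3),(4,4)\}$; Alice can play at $(2,2)$, which is in your central sub-grid, yet no symmetry places the hole there, and no shift of magnitude at most $\lfloor n/4\rfloor=2$ reaches the corner $(0,0)$ of the block either. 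So the rule ``Alice's last cell becomes the hole of Bob's new copy'' cannot be implemented as you describe.

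Even setting that aside, the legality argument you flag as ``the main obstacle'' is a real one that the reserved-strip idea does not resolve. Once Bob has placed a shifted copy reaching into a neighbouring block, a subsequent Alice move landing in the overlap region forces a second Bob copy whose bounding box would intersect the first; your invariant (i) then fails. The paper avoids both difficulties by a different and simpler pairing: it uses \emph{larger} blocks, of side $n+\lfloor (n-1)/2\rfloor$, and asks Bob merely to fit one copy of $A$ somewhere inside the block containing Alice's move (not with any prescribed hole position). The size is chosen so that (a) a single occupied cell never obstructs such a placement when the hole is $\ge\lfloor n/4\rfloor$ from the boundary, and (b) any copy of $A$ Alice might build meets at most four blocks, hence by pigeonhole would contribute an $\lceil n/2\rceil\times\lceil n/2\rceil$ near-square to one block already containing a Bob copy --- a contradiction. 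That argument needs no shifts, no symmetry-matching, and no induction on legality.
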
 
\begin{proof}
The proof of this claim also uses the pairing strategy, where this time we partition the board into blocks of size $(n + \lfloor (n-1)/2 \rfloor)\times (n + \lfloor (n-1)/2 \rfloor)$. 
It is easy to see that if the removed cell is at least $\lfloor n/4\rfloor$ units away from the boundary, 
then Bob can always play his animal inside the same block as Alice's last move. 

Assume that Alice is able to construct a copy of the animal on the board. Observe that this animal can intersect with at most 4 blocks. By the pigeonhole principle, there would be a block in which Alice's pieces form a square 
of size at least $\lceil n/2 \rceil \times \lceil n/2 \rceil$ (possibly with interior cells removed). 
However, this cannot occur since Bob also occupies the same block with an $n\times n$ square.
\end{proof}

We note that we have been unable to use a similar pairing strategy when the hole is close to the boundary. In all the partitioning strategies we considered, Alice was able to create a copy of the polyomino. This pairing strategy works for many types of polyominoes. However, in some cases we might need a more careful partitioning of the grid into blocks:

\begin{figure} 
\center
\includegraphics[width=0.60 \textwidth]{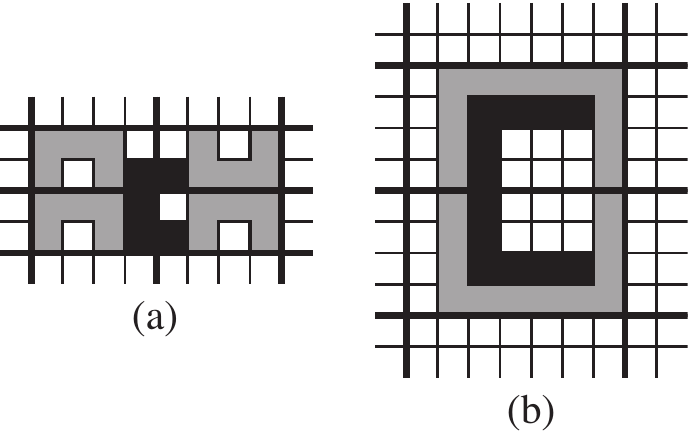}
\caption{Examples of failed partitions.}
\label{fig_U-fail}
\end{figure}
\begin{figure} 
\center
\includegraphics[width=0.60 \textwidth]{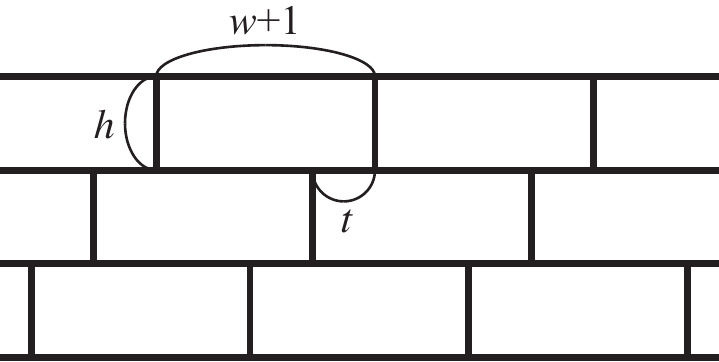}
\caption{Tiling and shift size $t$.}
\label{fig_shift}
\end{figure}
%
%\begin{lemma}\label{lem_u} 
\begin{theorem}\label{th_u} 
For any $h,w\in \mathbb{N}$ (other than $(h,w)=(2,4)$), the $U(h,w,1)$ animal is cannibal. 
\end{theorem}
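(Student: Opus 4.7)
The plan is to adapt the pairing strategy of Theorem~\ref{th_ring} by partitioning the board into rectangular blocks and shifting alternate rows of blocks horizontally by some offset $t$, as illustrated in Figure~\ref{fig_shift}. Bob's strategy matches each of Alice's moves with a placement of a (possibly rotated or reflected) copy of $U(h,w,1)$ inside the block that contains Alice's latest cell; if the block already contains one of Bob's animals, he places his copy in an arbitrary empty block, exactly as in the proof of Theorem~\ref{th_ring}.

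Two facts must be established. First, for every cell inside a block, Bob must be able to place some rotated/reflected/translated copy of $U(h,w,1)$ inside the block without overlapping that cell. This is a finite check: once the block dimensions $s_x \times s_y$ are chosen slightly larger than the bounding box of $U(h,w,1)$, Bob has several admissible placements (including rotations) and at least one of them misses Alice's cell. Second, after Bob has responded to every Alice move, the free cells left on the board must not contain any copy of $U(h,w,1)$. The failed tilings of Figure~\ref{fig_U-fail} show what goes wrong in the naive partition: the empty ``mouths'' inside Bob's $U$-copies line up across rows of blocks and form long corridors that Alice can exploit to assemble her own $U$. The horizontal shift $t$ between alternate rows is chosen precisely to stagger these mouths, so that for each of the four orientations of a putative Alice $U$, at least one of its arms or its base is forced to cross one of Bob's arms or bases.

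The heart of the argument is showing that the parameters $s_x$, $s_y$, and $t$ can be tuned so that both the placement condition and the blocking condition hold simultaneously for every admissible $(h,w)$. The blocking condition reduces to a case analysis over how a candidate Alice-$U$ overlaps at most four neighbouring blocks in the shifted tiling, distinguishing its four orientations and the parity of the block row it crosses. I expect this case analysis to be the main obstacle, as it depends delicately on the congruences of $h$, $w$, and $t$. Finally, the single exception $(h,w)=(2,4)$ is unavoidable for this technique: the arms of $U(2,4,1)$ are only $2$ units long and its base $4$ units wide, and these dimensions ``resonate'' with the tiling geometry, so that no matter how $s_x$, $s_y$, and $t$ are chosen Alice can always slip a copy of $U(2,4,1)$ through the free region between Bob's animals. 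Hence that case is excluded from the theorem.
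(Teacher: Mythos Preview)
Your plan is the same as the paper's: a shifted rectangular block tiling combined with the pairing strategy, with the shift chosen to break up the aligned ``mouths'' of Bob's $U$-copies. The paper, however, commits to concrete parameters---block size $(w+1)\times h$ and shift $t=2$ when $h=2$, $t=\lfloor(w+1)/2\rfloor$ when $h\ge 3$ and $h-2\le w\le 2h-2$, and $t=0$ otherwise---and then declares the blocking verification routine; your proposal leaves the block dimensions and $t$ as unspecified tunables and defers the entire case analysis, so as written it is an outline of the right idea rather than a proof.
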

\begin{proof}
Bob virtually partitions the playing board into blocks of size $(w+k)\times h$. 
But if he arranges these blocks naively, 
there might be ``cracks'' between Bob's animals in which Alice could construct her animal
%there may be some space in which Alice could construct her animal in a ``crack'' between Bob's animals,  
(see Figure~\ref{fig_U-fail}). 
To avoid such cracks, Bob must slant his partition, 
thus tiling the grid with blocks with a shift of size (distance) $t$ (Figure~\ref{fig_shift}). 
%Let $t$ be the size (distance) of the shift of the tiling of the blocks. 
We define the block $B_{i,j}$ as the rectangle $[i(w+1)+jt,i(w+1)+w+jt]\times [jh, jh+h-1]$. 
% depending on the values of $w$ and $h$ (see Figure~\ref{fig_shift}). 
%
The exact value of the slant depends on the parameters $w$ and $h$:  
\begin{description}
\item[$h=2$ (and $w \neq 4$):] $t=2$. 
\item[$h \geq 3$ and $2h-2 \geq w \geq h-2$:] $t= \lfloor (w+1)/2 \rfloor$.
\item[Otherwise:] No slant is necessary (i.e., $t=0$). 
\end{description}

%In this case, if $2h-2 \geq w \geq h-2$, then the case of Figure~\ref{fig_U-fail}~(b) may be occur. Then we separate cases: 
%\begin{description}
%\item[Case 2-1] $2h-2 \geq w \geq h-2$, there wi: $t= \lfloor (w+1)/2 \rfloor$. 
%\item[Case 2-2] $w > 2h-2$ or $w< h-2$: $t=0$. 
%\end{description}
%\end{description}
It is easy to show that with such a partition, Alice will be unable to construct her animal. % (details omitted in this version).  
\end{proof}

By combining Theorems \ref{th_l} and \ref{th_u} we can prove the existence of 
cannibal animals of any size. For example, the polyomino $U(2,n-2,1)$ is a cannibal animal of size $n$ for any $n\geq 5$ (except for $n=6$). If $n=6$, an example of a cannibal animal would be $L(2)$. The above result combined with Theorem~\ref{theo_rect} this allows us to show the existence of both cannibal and non-cannibal animals of any size.

\begin{cor}\label{cor_5more} 
For any $n\geq 5$, there exists a cannibal and a non-cannibal polyomino of size $n$.
\end{cor}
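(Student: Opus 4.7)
The plan is a direct construction: for every $n\geq 5$, I will exhibit one non-cannibal and one cannibal polyomino of size exactly $n$, in each case invoking a classification result already established in the paper.

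For the non-cannibal half, I would take the degenerate rectangle $R(1,n)$, which consists of $n$ cells arranged in a single row. Theorem~\ref{theo_rect} applies to $R(n,m)$ for any $n,m\in\mathbb{N}$, so $R(1,n)$ is non-cannibal for every $n\geq 5$ (in fact for every $n\geq 1$). If a two-dimensional example is preferred, any factorization $n=ab$ with $a,b\geq 2$ gives another valid rectangle.

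For the cannibal half, I would use the $U$-family. The animal $U(2,n-2,1)$ has $k(2h+w-2k)=1\cdot(4+(n-2)-2)=n$ cells, and the parameter conditions $h\geq 2$, $w\geq 3$, and $k<\min\{h,w/2\}$ are all satisfied precisely when $n\geq 5$. Theorem~\ref{th_u} then guarantees $U(2,n-2,1)$ is cannibal, \emph{except} when $(h,w)=(2,4)$, i.e.\ except when $n=6$. For this single exceptional value I would substitute the animal $L(2)$, which by definition concatenates two copies of El and hence has $2\cdot 3=6$ cells; Theorem~\ref{th_l} ensures $L(2)$ is cannibal.

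Since each of the two constructions is already covered by the corresponding classification theorem, there is no serious technical obstacle; the only subtlety is remembering that $U(2,4,1)$ is the omitted case of Theorem~\ref{th_u} and that $n=6$ is precisely the size where $L(2)$ must be invoked instead.
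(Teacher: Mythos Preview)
Your proposal is correct and matches the paper's own argument almost verbatim: the paper likewise uses $U(2,n-2,1)$ (via Theorem~\ref{th_u}) for the cannibal example when $n\neq 6$, substitutes $L(2)$ (via Theorem~\ref{th_l}) at $n=6$, and appeals to Theorem~\ref{theo_rect} for the non-cannibal rectangle. The only cosmetic difference is that you explicitly name $R(1,n)$ as the rectangle, whereas the paper leaves the choice implicit.
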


%We note that that the animal $U(2,4,1)$ is still open.

We now introduce another idea to generate new cannibal animals from known cannibal animals. 
Let $A$ be an animal and let $C$ be a subset of cells of $A$. 
Then $A\setminus C$ is an animal created  by removing $C$ from $A$ (this operation will only be considered when $A\setminus C$ is connected). We say that $C$ is an {\em outer piece} 
if we can locate a second disjoint copy of $A\setminus C$ that overlays with 
a part of 
the removed piece $C$ of the first copy (even if partially); we call $C$ an {\em inner piece} otherwise. See Figure~\ref{fig_inner}. 

Notice that even if $C$ and $C'$ are both inner pieces, 
$C \cup C'$ need not be so. However, the superset of an outer piece must be an outer piece. 
%If $C$ is an outer piece, then for any superset $C'$ of $C$ is also outer. 

\begin{figure} 
\center
\includegraphics[width=0.7\textwidth]{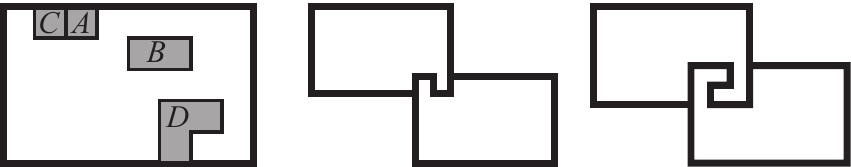}
\caption{$A$ and $B$ are inner pieces.  
$C$ and $D$ are outer pieces since a second copy covers 
a part of the piece as seen in the right examples.}
\label{fig_inner}
\end{figure}

\begin{lemma}[Punching Lemma]\label{lem:punch}
Let $A$ be a cannibal and let $C$ be an inner piece of $A$. The animal $A\setminus C$ is also a cannibal. 
\end{lemma}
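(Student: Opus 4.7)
The plan is to prove the contrapositive: assume Alice has a winning strategy $\sigma'$ in the $(A\setminus C)$-game, and show that Alice then wins the $A$-game, contradicting $A$ being a cannibal. In the real $A$-game, Alice maintains a virtual $(A\setminus C)$-game in which her own moves coincide with her real moves; whenever real Bob places a copy of $A$ at some position $Q$, virtual Bob is recorded as having placed the $A\setminus C$-copy at $Q$. Since $A\setminus C\subseteq A$, these virtual Bob moves are always legal: they are disjoint from Alice's virtual cells (identical to her real cells) and from previous virtual Bob cells (subsets of previous real Bob cells). Alice plays the moves prescribed by $\sigma'$ in the virtual game. Because $\sigma'$ is a winning strategy, she eventually completes a copy of $A\setminus C$ at some position $P$, both virtually and (with the same cells) really. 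To win the real $A$-game, she plans to extend this to a copy of $A$ at $P$ by playing the cells of $C$ at $P$ on subsequent turns.

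The inner-piece hypothesis is used here to show that these extension cells are not blocked by Bob's $A\setminus C$-parts. For every copy $A$ at $Q$ that Bob has placed in the real game, its $A\setminus C$-part is a copy of $A\setminus C$ that is disjoint from Alice's copy at $P$ (cells belonging to different players cannot overlap). By the definition of inner piece, no such disjoint copy of $A\setminus C$ can overlay any cell of the removed piece $C$ at $P$; hence no cell of $C$ at $P$ lies in the $A\setminus C$-part of any Bob $A$-copy.

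The main obstacle, and the delicate technical step, is handling the residual possibility that cells of $C$ at $P$ are occupied by the $C$-parts of some of Bob's $A$-copies---a situation that the inner-piece definition does not immediately rule out. The plan for this step is to iterate: if any such obstruction occurs at $P$, Alice continues applying $\sigma'$ from the updated virtual game history to produce a fresh copy of $A\setminus C$ at some new position $P'$ on the infinite grid. Because at any finite moment only finitely many cells are occupied while $\sigma'$ can be continued indefinitely to produce further $A\setminus C$-copies, Alice must eventually reach a position $P_i$ whose entire $C$-region is unobstructed; she then fills in its $C$-cells one by one and completes a copy of $A$, contradicting that $A$ is a cannibal. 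Formalising that $\sigma'$ can be consistently re-applied from an arbitrary mid-game history, and choosing the iteration so that the produced $A\setminus C$-copies eventually escape the (finite) zone of obstructing $C$-parts, is the heart of the proof.
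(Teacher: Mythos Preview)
Your overall approach coincides with the paper's: argue the contrapositive, have Alice simulate her $(A\setminus C)$-winning strategy while Bob is really placing copies of $A$, and once a copy of $A\setminus C$ is built at some position $P$, use the inner-piece hypothesis to conclude that the $C$-region at $P$ is available and can be filled in. The paper's proof is only four sentences long: it simply asserts that ``because $C$ is an inner piece, this position cannot be occupied by Bob'' and stops there, without setting up a virtual game or discussing the $C$-parts of Bob's copies at all.

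Where you go beyond the paper is in flagging the possibility that a cell of $C$ at $P$ lies in the $C$-part (rather than the $(A\setminus C)$-part) of some Bob copy. That is a legitimate worry that the paper's short argument does not explicitly address. However, your proposed resolution --- keep re-running $\sigma'$ to manufacture further $A\setminus C$-copies until one has an unobstructed $C$-region --- has a real gap. Once $\sigma'$ has produced a winning copy it need not prescribe any further moves, so ``continue applying $\sigma'$'' is not automatically meaningful; and even granting that, your finiteness claim fails because Bob keeps playing too, so the set of obstructing $C$-cells is not fixed and there is no reason your successive copies must eventually escape it. You yourself say this step ``is the heart of the proof'', but you do not actually carry it out.

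There is also a parallel issue you pass over: you assert that Alice's real moves coincide with the moves $\sigma'$ prescribes, but a prescribed cell might already be occupied in the real game by a $C$-part of some Bob copy (it is free in the virtual game). The standard remedy is to let Alice play a throwaway real move while recording the prescribed cell as hers virtually; one then has to argue --- and here the inner-piece hypothesis is exactly what is needed --- that none of the cells of the final virtual $A\setminus C$-copy are such phantom cells, so that the copy is genuinely Alice's in the real game. A careful version of that same inner-piece argument, rather than an open-ended iteration, is the cleaner route for the $C$-region as well.
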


\begin{proof}
Assume otherwise that $A\setminus C$ is non-cannibal; By definition, Alice is capable of constructing a copy of $A\setminus C$ without Bob preventing it. Consider now the removed piece $C$ of the animal Alice constructed. Because $C$ is an inner piece, this position cannot be occupied by Bob. Hence, Alice can afterwards occupy this position in subsequent rounds to form animal $A$. Thus, we obtain a contradiction.
\end{proof}

Note that the reciprocal is not always true (see for example Lemma \ref{lem_squa2} and Theorem \ref{theo_rect}). As a simple application of this lemma, we have the following result:% (the proof is omitted): 

\begin{theorem}
For any integer $n \geq 4$, 
let $S'$ be an animal $R(n,n)$ in which any number of interior cells have been removed. 
If at least one of the removed cells has distance $\lfloor n/4 \rfloor$ or more to the boundary,  
then
$S'$ is a cannibal. 
\end{theorem}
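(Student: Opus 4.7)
The plan is to apply the Punching Lemma directly. Let $c_0$ denote the removed interior cell at distance at least $\lfloor n/4 \rfloor$ from the boundary whose existence is hypothesized, and set $A := R(n,n) \setminus \{c_0\}$; by Lemma~\ref{lem_squa2}, $A$ is a cannibal. Writing $C$ for the (possibly empty) set of the remaining removed interior cells, we have $S' = A \setminus C$, and it suffices to verify that $C$ is an inner piece of $A$ -- Lemma~\ref{lem:punch} will then immediately give that $S'$ is a cannibal.

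The verification rests on the following observation: since every hole of $S'$ lies in the interior of $R(n,n)$, the $4(n-1)$ cells on the perimeter of the $n\times n$ bounding box of any copy of $S'$ are solid, and this holds regardless of the orientation of the copy (the perimeter of an $n\times n$ bounding box is preserved under rotations and reflections as a set of cells). Assume for contradiction that there exist two copies $P_1, P_2$ of $A \setminus C = S'$ having disjoint solid cells, with $P_2$ carrying a solid cell at some $c \in C$ of $P_1$. Since $c$ lies inside $P_1$'s bounding box, the bounding boxes of $P_1$ and $P_2$ must overlap. If the shift $(a, b)$ between them is $(0, 0)$, the two bounding boxes coincide and the $4(n-1)$ perimeter cells are solid in both, immediately violating disjointness.

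Otherwise the overlap is a strictly smaller rectangle, and a short sign-case analysis on $(a,b)$ produces one of its four corners lying on a boundary edge of both bounding boxes simultaneously -- and therefore solid in both copies. For example, when $a>0$ and $b>0$, the corner $(a, n-1)$ of the overlap sits on $P_2$'s left edge ($x=a$) and on $P_1$'s top edge ($y=n-1$); when $a<0$ and $b<0$, the corner $(0, b+n-1)$ sits on $P_1$'s left edge and on $P_2$'s top edge; the remaining quadrants and the axis-aligned cases $a=0$ or $b=0$ are handled by symmetry. In every case we exhibit a cell that is solid in both $P_1$ and $P_2$, contradicting disjointness. Hence $C$ is an inner piece of $A$, and the Punching Lemma concludes.

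The only real obstacle is the bookkeeping through the handful of sign-cases for the shift $(a,b)$, but each reduces to pointing out a single corner lying simultaneously on a boundary edge of each bounding box; no information about the specific positions of the cells of $C$ is required beyond their being interior.
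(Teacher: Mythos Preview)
Your proof is correct and follows the same route the paper intends: take $A=R(n,n)\setminus\{c_0\}$, which is a cannibal by Lemma~\ref{lem_squa2}, and then apply the Punching Lemma (Lemma~\ref{lem:punch}) with $C$ the remaining removed interior cells. The paper itself gives no argument beyond calling the theorem ``a simple application'' of the Punching Lemma; you go further and actually verify that $C$ is an inner piece of $A$, via the clean observation that two copies of $S'$ whose $n\times n$ bounding boxes overlap must share a perimeter cell (all perimeter cells being solid because only interior cells were removed), so disjoint copies have disjoint bounding boxes and hence cannot touch $C$. This verification is a genuine addition over the paper's presentation.
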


%The proof of this theorem is constructive and will be given in the following sections. 
%In the following sections we will give winning strategies for several families of polyominoes. In particular, these families will contain both cannibal and non-cannibal polyominoes of size $5$ or more. Hence, the proof of the above result is deferred to the following sections. %That is, we give both cannibal and non-cannibal animals of arbitrary size.
%For showing these results we introduce some strategies: 
%``bounding strategy'' for Alice and the ``punching lemma'' for Bob. 
%We think that these techniques will be useful for treating other animals and/or similar games. 

%\begin{proof}
%Since interior cells always form an inner block, 
%this is directly obtained from Lemmas~\ref{lem_squa2} and~\ref{lem:punch}. 
%\end{proof}

%\mati{We need more shapes in order for this theorem to be more useful.}

\section{Concluding remarks}
\iffalse
We presented the Cannibal Animal Game, a new partial game as variant of 
tic-tac-toe-like games. In Harary's generalized tic-tac-toe, some monotone properties hold, 
e.g., ``increasing the size of the board helps Alice'' 
and ``increasing the size of the animal helps Bob.''
However, such properties do not hold for this game, making it deeper and more interesting. 

In this paper we studied some basic animals and classified them into cannibals (losers) or non-cannibals (winners). More interestingly, we introduced the ``Bounding Strategy'' (Lemma~\ref{lem_rect}) and ``Punching Lemma (Lemma~\ref{lem:punch}),'' tools which we think that can be used for other games. 
%We also showed that the ``Pairing Strategy'' works for this problem. 
\fi

%We presented the Cannibal Animal Game, a new partial game as variant of 
%tic-tac-toe-like games. 
In Harary's generalized tic-tac-toe, some monotone properties hold; 
these properties include ``increasing the size of the board helps Alice'' 
and ``increasing the animal helps Bob.''
However, such properties do not hold for the cannibal animal game, making it deeper and more interesting. 
We also note that the cannibal property of many other animals is still left unsolved. 
Among them is the $U(2,4,1)$ animal, which we conjecture to be a cannibal. 
%(and that square is less than $\lfloor n/4 \rfloor$ units away from the boundary). 
%For rectangles, a property similar to 
%Lemma~\ref{lem_squa2} should hold. 
%For $U(h,w,k)$ animimals with $k \geq 2$, 
%a generalization of Lemma~\ref{lem_u} should be obtained. 
%It may be not difficult to find them. 
%On the other hand, it is easy to see that any animal consisting of at most $3$ cells is a non-cannibal.
  We conjecture that all $4$-cell-animals are also non-cannibals, and consequently, the $5$-cell-animal $U(2,3,1)$ would be the smallest cannibal. Another problem that remains open is what happens with the squares $R(n,n)$ in which one or more interior cells have been removed, and the distance of these removed cells to the boundary is less than $\lfloor n/4 \rfloor$ units away from the boundary.
  
Finally, we conclude with an open problem posed by an anonymous referee; observe that the only arbitrarily large non-cannibal animals that we know of are rectangles. So, it would be interesting to know if there exist arbitrarily large non-cannibal animals (other than rectangles).

\section*{Acknowledgments}
\small{
We are deeply grateful to Professor Ferran Hurtado for his valuable comments during discussions on early stages of this paper. 
}

%---------------------------- Bibliography -------------------------------

% Please add the contents of the .bbl file 

\small 
\bibliographystyle{abbrv}

\end{document}